\documentclass[leqno]{amsart}
\usepackage{amsfonts,amssymb,amsmath,amsgen,amsthm}
\usepackage{comment}
\usepackage{hyperref,color}

\theoremstyle{plain}
\newtheorem{theorem}{Theorem}[section]

\newtheorem{lemma}[theorem]{Lemma}
\newtheorem{corollary}[theorem]{Corollary}
\newtheorem{proposition}[theorem]{Proposition}

\theoremstyle{remark}
\newtheorem{remark}[theorem]{Remark}

\newtheorem{example}[theorem]{Example}

\def\dis{\displaystyle}

\def\C{{\mathbf C}}
\def\R{{\mathbf R}}
\def\N{{\mathbf N}}
\def\Sch{{\mathcal S}}
\def\O{\mathcal O}
\def\F{\mathcal F}
\def\dd{\mathrm d}

\def\({\left(}
\def\){\right)}
\def\<{\left\langle}
\def\>{\right\rangle}
\def\le{\leqslant}
\def\ge{\geqslant}

\def\Tend#1#2{\mathop{\longrightarrow}\limits_{#1\rightarrow#2}}

\def\d{{\partial}}
\def\eps{\varepsilon}

\def\om{\omega}

\def\pp{p}

\numberwithin{equation}{section}

\begin{document}

\title{Higher-order Schr\"{o}dinger and Hartree--Fock equations}

\author[R. Carles]{R\'emi Carles}
\address{CNRS \& Univ. Montpellier\\ UMR5149\\ Math\'ematiques
\\CC051\\34095 Montpellier\\ France}
\email{Remi.Carles@math.cnrs.fr}

\author[W.~Lucha]{Wolfgang Lucha}
\address{Institute for High Energy Physics\\
Austrian Academy of Sciences\\ Nikolsdorfergasse 18, A-1050
Vienna, Austria} \email{Wolfgang.Lucha@oeaw.ac.at}

\author[E. Moulay]{Emmanuel Moulay}
\address{XLIM (UMR-CNRS 7252), Univ. Poitiers\\
11 bd Marie et Pierre Curie, BP 30179\\
86962 Futuroscope Chasseneuil Cedex\\ France}
\email{emmanuel.moulay@univ-poitiers.fr}

\begin{abstract}
The domain of validity of the higher-order Schr\"{o}dinger
equations~is analyzed for harmonic-oscillator and Coulomb
potentials as typical examples. Then the Cauchy theory for
higher-order Hartree--Fock equations with~bounded and Coulomb
potentials is developed. Finally, the existence of associated ground
states for the odd-order equations is proved. This renders these
quantum equations relevant for physics.
\end{abstract}

\thanks{RC was supported by the French ANR projects SchEq
 (ANR-12-JS01-0005-01) and BECASIM (ANR-12-MONU-0007-04).}
\maketitle

\section{Introduction}
In this work, we discuss, in a rather general setting, various
equations of motion enjoying considerable interest in numerous areas
of physics and, by establishing the well-posedness of the
corresponding problems, we try to put applications of these equations
of motion on a solid basis:
The higher-order Schr\"{o}dinger equations have been developed in, e.g.,
\cite{CaMo12,Kim12}. These~are Schr\"{o}dinger-type equations
involving a higher-order Schr\"{o}dinger operator
\cite{Cycon87,Gorban07,Helffer88,Karpeshina12} and converging
towards the semirelativistic bound-state equation called~the
spinless Salpeter equation. The original Schr\"{o}dinger equation
has been formulated by Schr\"{o}dinger in 1926
\cite{Schrodinger26}. The spinless Salpeter equation is studied,
for instance, in \cite{Hall05,Hall07}. The Cauchy problem of the
higher-order Schr\"{o}dinger equations without potential, i.e.,
for free particles, is studied in \cite{CaMo12,Kim12}. The case of
bounded potentials (e.g., particles in finite potential wells) and
of linear potentials (e.g., neutrons in free fall in the gravity
field and electrons accelerated by an electric field) is treated
in~\cite{CaMo12}. Moreover, the higher-order Schr\"{o}dinger
operator with quasi-periodic potentials in two dimensions is
discussed in \cite{Karpeshina12}.

The extraordinarily high interest of the particle physics
community in the spinless Salpeter equation derives primarily from
the fact that this semirelativistic equation of motion constitutes
a well-defined approximation to the Bethe--Salpeter formalism
\cite{Salpeter51} designed for the Lorentz-covariant
description of bound states within~relativistic quantum field
theory. Within this framework, it may be obtained along the
course~of a ``three-dimensional reduction'' effected by a sequence
of reasonable and physically justified assumptions \cite{Lucha91,Lucha99}: In the limit of all bound-state
constituents interacting instantaneously \cite{Lucha05} as
well as propagating freely, the homogeneous Bethe--Salpeter
equation reduces to Salpeter's equation \cite{Salpeter52},
which, upon neglect of negative-energy contributions and spin
degrees of freedom and restriction of the involved interaction
kernels to convolution form, eventually simplifies to the spinless
Salpeter equation. The latter may be regarded as the
straightforward generalization of the Schr\"odinger equation
towards inclusion of the relativistically correct free-particle
kinetic energy.

The semirelativistic bound-state equation emerging from the
derivation sketched above is the eigenvalue equation of a nonlocal
Hamiltonian composed of the sums~of the relativistic kinetic
energies of the bound-state constituents --- represented by~the
famous square-root operators --- and of the interactions of these
particles --- encoded in appropriately chosen potentials. The
inconvenience induced by the nonlocality~of such a
spinless-Salpeter Hamiltonian, however, occasionally tempts
practitioners~to expand each square-root operator, regarded as a
function of the involved momentum squared, in a (truncated) Taylor
series not only up to the lowest nontrivial order~--- which gives
the usual Schr\"odinger equation --- but (at least) up to the
next-to-lowest nontrivial order, without paying attention to
obviously crucial questions such as~the well-posedness~of the
problem or the existence of a ground state. Unfortunately,~the
Hamiltonian involving this expansion up to next-to-lowest
nontrivial order proves~to be unbounded from below. In this
situation, a remedy \emph{might\/} be, \emph{if done properly\/},
to construct merely approximate solutions to such \emph{pseudo\/}
spinless Salpeter equation~by taking into account all the
worrisome terms in the Hamiltonian only perturbatively.

In order to settle this question once and forever, we analyze such
semirelativistic equations of motion for expansions of the
relativistic kinetic energy up to arbitrarily high order. The
outcomes of such expansions are known as higher-order
Schr\"odinger equations.

Logically, the next steps then are to consider the corresponding
time-dependent equations, to allow for the general case of systems
composed of more than one~or~two interacting particles, which
yields the Hartree equation, and to take into account the
fermionic nature of the involved particles, which leads to the
Hartree--Fock equation.

The Hartree equation, found by Hartree in the 1920s
\cite{Fischer03}, arises in the mean-field limit of large systems
of identical bosons (as, e.g., the Gross--Pitaevskii equation~for
Bose-Einstein condensates \cite{Gross61,Pitaevskii61}) by taking
into account the self-interactions~of~the bosons. A
semirelativistic version of the Hartree equation was obtained in
\cite{Elgart07,Lenzmann07} for modeling boson stars. The
Hartree--Fock equation, also developed by Fock~\cite{Fock30},
describes large systems of identical fermions (finding application
in, e.g., electronic structure theory) by taking into account the
self-interactions of charged fermions as well as an exchange term
resulting from Pauli's principle. A semirelativistic version of
the Hartree--Fock equation was developed in \cite{Frohlich07} for
modeling white dwarfs. The Hartree equation is also used for
fermions as an approximation of the Hartree--Fock equation
neglecting the impact of their fermionic nature. Hartree and
Hartree--Fock equations are used for several applications in
many-particle physics \cite[Section 2.2]{Lipparini08}.

Our first issue is to extend the scope of the higher-order
Schr\"{o}dinger equations~to the Coulomb potential by using
perturbation theory. Coulomb potentials have~been widely employed
for the Schr\"{o}dinger equation (see, for instance,
\cite{Greene76,Lam71,Rogers70,Rau71,Yajima87}); among others, it
is conceivable to apply the higher-order Schr\"{o}dinger
equations~with a Coulomb potential to the $\alpha$ particles,
semirelativistically moving charged bosons composed of two protons
and two neutrons and produced in the nuclear $\alpha$
decay~\cite{Lilley01}. Generalizing a result obtained in
\cite{CaMo12}, we also prove that the higher-order Schr\"{o}dinger
equations converge towards the spinless Salpeter equation. Our
second issue is to develop higher-order Hartree--Fock equations
for bounded and Coulomb potentials. This allows us to take into
account some relativistic effects in many-particle physics, as,
for instance, the electrons of heavy atoms in quantum chemistry
\cite{Kaldor03,Pyykko77,Szabo96}, the semirelativistic electron
gas in a finite potential well \cite[Section 4.2]{Martin04}, the
metal clusters \cite[Section 2.2.1]{Lipparini08} or the modeling
of white dwarfs \cite{Frohlich07}. Last but not least,~in order to
give a physical meaning to the higher-order Schr\"{o}dinger and
Hartree--Fock equations, the existence of a ground state is proved
for the odd-order~equations.

This paper is organized as follows. After recalling some notations
and definitions in Section~\ref{Sec Def}, the case of higher-order
Schr\"{o}dinger equations with harmonic-oscillator or Coulomb
potentials is discussed in Section~\ref{sec:linear}.
Section~\ref{Sec HF BP} is devoted to the Cauchy problem of the
higher-order Hartree--Fock equations. An important special case,
the Cauchy problem of higher-order Hartree--Fock equations with a
Coulomb potential, is addressed in Subsection~\ref{Sec HF
Coulomb}.
In Section~\ref{sec:ground}, we prove the existence of a
ground~state for the odd-order Schr\"{o}dinger and Hartree--Fock
equations. Our conclusions may be found in Section~\ref{Sec
Conclusion}. In the Appendix, an extension of the convergence of
the higher-order Schr\"{o}dinger equations towards the spinless
Salpeter equation is provided.

\section{Notations and definitions}\label{Sec Def}
The wave function of a particle is denoted by $\psi(t,x)$, where
$x$ is the position of the particle and $t$ the time. Moreover,
$\psi$ stands for $\psi(t,x)$. $\Delta:=\nabla^2$ denotes the
Laplace operator.

For $x\in\R^3$, $|x|$ denotes the Euclidean norm of $x$. The
notation $\ast$ stands for the convolution, defined, in the case of integrable
functions, by the formula
\begin{equation}
\left(f\ast g\right)(x)=\int_{\R^3} f(x-y)g(y)dy.
\end{equation}

Upon approximating the relativistically correct expressions for
the kinetic energy
\begin{equation}\label{Energy}
E=\sqrt{\textbf{p}^2 c^2 + m^2 c^4}
\end{equation}
of a free particle of mass $m$ and momentum $\textbf{p}$ by its
expansion in powers of $\textbf{p}^2/m^2$,
\begin{equation}\label{eq Taylor approx}
E_J=m c^2 \left(1+\sum_{j=1}^{J}(-1)^{j+1} \alpha(j)
\frac{\textbf{p}^{2j}}{m^{2j} c^{2j}}\right),
\end{equation}
and applying the correspondence principle \cite{Bohr76}
\begin{equation}\label{Corresp}
E \leftrightarrow i\hbar \frac{\partial}{\partial t} \qquad \textbf{p}
\leftrightarrow -i\hbar \frac{\partial}{\partial x } = -i\hbar \nabla,
\end{equation}
we obtain the \emph{higher-order Schr\"{o}dinger equations}
\begin{equation}\label{eq:pot}
 i\hbar \frac{\partial \psi}{\partial t} =-\sum_{j=0}^J\frac{\alpha(j)\hbar^{2j}}{m^{2j-1}c^{2j-2}} \Delta^j\psi+ V\psi,
\end{equation}
where $V$ is an external potential, $J\in\N^*$, $\hbar=\frac{h}{2
\pi}$ the reduced Planck constant, $c$ the speed of light, and
\begin{equation}\label{eq:alpha}
\alpha(j)=\frac{(2j-2)!}{j!(j-1)!2^{2j-1}},\quad j\ge 1,
\end{equation}
with $\alpha(0)=-1$. Stirling's formula yields, in particular,
\begin{equation}\label{eq:stirling}
 \alpha(j)=\O\(\frac{1}{j^{3/2}}\) \quad\text{as }j\to \infty.
\end{equation}
See \cite{CaMo12} for more details. We denote the higher-order
kinetic-energy operator by
\begin{equation}\label{eq:H0J}
\mathcal{H}_{0J}=-\sum_{j=0}^J
\frac{\alpha(j)\hbar^{2j}}{m^{2j-1}c^{2j-2}} \Delta^j.
\end{equation}
\begin{example}\label{ex:H0J}
For $J=1$, we get the regular Schr\"odinger operator
 \begin{equation*}
 \mathcal{H}_{01}= mc^2 - \frac{\hbar^2}{2m}\Delta.
 \end{equation*}
For $J=2$, we get
\begin{equation*}
 \mathcal{H}_{02}= mc^2 - \frac{\hbar^2}{2m}\Delta
 -\frac{\hbar^4}{8 m^3 c^2}\Delta^2.
\end{equation*}
\end{example}

The semirelativistic \emph{time-dependent spinless Salpeter
equation} is given by
\begin{equation}\label{SR}
i \hbar \frac{\partial}{\partial t}\psi=\sqrt{-c^2 \hbar^2 \Delta + m^2 c^4} \ \psi+ V \psi.
\end{equation}

Let us introduce the integro-differential Hartree and
Hartree--Fock equations given, for instance, in
\cite{Frohlich07,Lipparini08,Martin04}. The \emph{Hartree
equation} of $N$ particles is defined~by
\begin{equation}\label{H}
i \hbar \frac{\partial}{\partial t}\psi_k=-\frac{\hbar^2}{2m}
\Delta \psi_k+\sum_{\ell=1}^N \left(\frac{\kappa}{|x|}\ast
\left|\psi_\ell\right|^2\right)\psi_k + V \psi_k,
\end{equation}
with $V$ an external bounded potential, $\kappa$ a real constant and
$k=1,\ldots,N$. The Hartree factor
\begin{equation*}
H=\sum_{\ell=1}^N \left(\frac{\kappa}{|x|}\ast
 \left|\psi_\ell\right|^2\right)
\end{equation*}
describes the self-interaction between charged particles as a
repulsive force if $\kappa >0$, attractive force if $\kappa<0$. The
\emph{Hartree--Fock equation} of $N$
particles is given by
\begin{equation}\label{HF}
i \hbar \frac{\partial}{\partial t}\psi_k=-\frac{\hbar^2}{2m} \Delta
\psi_k+H \psi_k-\sum_{\ell=1}^N \left(\frac{\kappa}{|x|}\ast \overline{\psi_\ell} \psi_k \right)\psi_\ell + V \psi_k.
\end{equation}
The Fock term
\begin{equation}\label{eq:Fock}
F_k(\psi_k)=\sum_{\underset{\ell \neq k}{\ell=1}}^N
\(\frac{\kappa}{|x|}\ast (\overline{\psi_\ell} \psi_k)\)\psi_\ell
\end{equation}
is an exchange term that is a consequence of the Pauli principle
and thus applies~to fermions.

The \emph{semirelativistic Hartree equation} is given by
\begin{equation}\label{RH}
i \hbar \frac{\partial}{\partial t}\psi_k=\sqrt{-c^2 \hbar^2 \Delta +
 m^2 c^4} \ \psi_k+H \psi_k+ V \psi_k,
\end{equation}
and the \emph{semirelativistic Hartree--Fock equation} by
\begin{equation}\label{RHF}
i \hbar \frac{\partial}{\partial t}\psi_k=\sqrt{-c^2 \hbar^2 \Delta +
 m^2 c^4} \ \psi_k+H \psi_k-F_k(\psi_k)+ V \psi_k,
\end{equation}
with $k=1,\ldots,N$. Both equations \eqref{RH} and \eqref{RHF} are
studied, for instance, in \cite{Aki08,Cho09,Frohlich07}. The main
difficulty of all these equations relies in the use of the
nonlocal pseudo-differential operator $\sqrt{-c^2 \hbar^2\Delta +
m^2 c^4}$ (see, e.g., \cite[Chapter 7]{LiebLoss} or
\cite{CHHO13}).

For $J\in\N^*$, we have the following \emph{higher-order
Hartree--Fock equations}:
\begin{equation}\label{HOHF}
i \hbar \frac{\partial}{\partial t} \psi_k = {\mathcal
H}_{0J}\psi_k +H \psi_k-F_k(\psi_k)+ V \psi_k,
\end{equation}
with $k=1,\ldots,N$ and $V$ an external potential.

\section{Higher-order Schr\"{o}dinger equations with an external
 potential}\label{sec:linear}

It is proved in \cite{CaMo12} that equations \eqref{eq:pot} have a
unique solution without external potential, and for a bounded or
linear (in $x$) potential $V$. On the other hand, for a harmonic
potential, the flow associated to \eqref{eq:pot} is not well-defined
for $J=2$.

The first main objective of this section is to prove that if one
sticks to \emph{odd} values of $J$, then the flow associated to
\eqref{eq:pot} is well-defined with $V$ a harmonic-oscillator
potential. Then we consider the presence of a Coulomb potential.

\subsection{Harmonic-oscillator potential}
\label{sec:harmonic-potential}

Introduce the Schwartz space
\begin{equation*}
 \Sch(\R^d) = \left\{ f\in C^\infty(\R^d;\C)\ | \ \sup_{x\in
 \R^d}\left\lvert x^\alpha \d^\beta f(x)\right\rvert<\infty,\quad
 \forall \alpha,\beta\in \N^d\right\}.
\end{equation*}
For $f\in \Sch(\R^d)$, the (semi-classical) Fourier transform of $f$, denoted by
$\widehat f$ or $\F(f)$, is defined by
\begin{equation*}
 \widehat f (\pp) = \frac{1}{(2\pi\hbar )^{d/2}}\int_{x\in
 \R^d} e^{-ix\cdot
 \pp/\hbar} f(x)\dd x,
\end{equation*}
where $\pp$ is the Fourier variable. The Fourier Inversion Formula reads
\begin{equation}
 \label{eq:fif}
 f(x) = \frac{1}{(2\pi\hbar )^{d/2}}\int_{\pp\in \R^d} e^{+i x\cdot
 \pp/\hbar} \widehat f(\pp)\dd \pp.
\end{equation}
The Fourier transform is uniquely continuously extended to the space
of tempered distributions, $\Sch'(\R^d)$, and is unitary on
$L^2(\R^d)$ (Plancherel formula):
\begin{equation}
 \label{eq:plancherel}
 \|u\|_{L^2(\R^d)}=\|\widehat u\|_{L^2(\R^d)},\quad \forall u\in
 L^2(\R^d).
\end{equation}
Among other features, the Fourier transform exchanges
differentiation and multiplication by a polynomial, typically
\begin{equation*}
 \F (-\hbar^2\Delta u)(\pp)=|\pp|^2\widehat u(\pp),
\end{equation*}
and
\begin{equation*}
 \F\( \mathcal{H}_{0J}u\)(\pp) = -\sum_{j=0}^J
\frac{\alpha(j)}{m^{2j-1}c^{2j-2}} (-1)^j|\pp|^{2j} \widehat u(\pp).
\end{equation*}
In the presence of a harmonic-oscillator potential, we get
\begin{equation*}
 \F\( \(\mathcal{H}_{0J}+\frac{|x|^2}{2}\)u\)(\pp) =
 \( -\frac{\hbar^2}{2}\Delta_\pp -\sum_{j=0}^J
\frac{\alpha(j)}{m^{2j-1}c^{2j-2}} (-1)^j|\pp|^{2j}\)\widehat u(\pp).
\end{equation*}
If $J=1$, we see that the Fourier transform maps the harmonic
oscillator to another harmonic oscillator, in agreement with
Example~\ref{ex:H0J}. For $J=2$, we find, still in~view of
Example~\ref{ex:H0J},
\begin{equation*}
 \F\( \(\mathcal{H}_{02}+\frac{|x|^2}{2}\)u\)(\pp) =
 \( -\frac{\hbar^2}{2}\Delta_\pp +mc^2 +\frac{|\pp|^2}{2m}
 -\frac{|\pp|^4}{8m^3c^2}\)\widehat u(\pp).
\end{equation*}
Since the operator on the right-hand side is not essentially
self-adjoint on $C_0^\infty(\R^d)$ (see, e.g., \cite{Dunford}),
the flow associated to the operator
$\mathcal{H}_{02}+\frac{|x|^2}{2}$ is not well-defined. This is so
essentially because trajectories associated to the Hamiltonian on
the right-hand side may reach an infinite speed, due to the fact that the
potential (in $\pp$) goes to $-\infty$ faster that quadratic. Such
a feature is ruled out if $J$ is restricted to odd~values, $J=
2n+1$, $n\in \N$, since then
\begin{equation*}
 \F\( \(\mathcal{H}_{0J}+\frac{|x|^2}{2}\)u\)(\pp) =
 \( -\frac{\hbar^2}{2}\Delta_\pp -\sum_{j=0}^{2n+1}
\frac{\alpha(j)}{m^{2j-1}c^{2j-2}} (-1)^j|\pp|^{2j}\)\widehat u(\pp),
\end{equation*}
and the potential in $\pp$ behaves for large $\pp$ as
\begin{equation*}
 +\frac{\alpha(2n+1)}{m^{4n+1}c^{4n}}|\pp|^{4n+2}.
\end{equation*}
This potential is therefore uniformly bounded from below (going to
$+\infty$ at infinity, hence it is confining), and the flow
associated to the corresponding Hamiltonian is well-defined (see,
e.g., \cite{ReedSimon2}):
\begin{proposition}
 Let $\psi_0\in L^2(\R^3)$ and $J$ be an \emph{odd} integer. Then
 $\mathcal{H}_{0J}$ is essentially self-adjoint, and  the Cauchy problem
 \begin{equation*}
 i\hbar \frac{\partial \psi}{\partial t} =\mathcal{H}_{0J}\psi+\frac{|x|^2}{2}
 \psi,\quad
\psi(x,0)=\psi_0(x),
 \end{equation*}
 has a unique
solution $\psi\in C(\R;L^2(\R^3))$. In addition, the following
conservation law holds:
\begin{equation*}
 \frac{d}{dt}\|\psi(t)\|_{L^2(\R^3)}^2=0.
\end{equation*}
\end{proposition}
\begin{remark}
 As noticed in \cite{CaMo12} and recalled above, the assumption that
 $J$ is odd is sharp.
\end{remark}

\subsection{Coulomb potential}\label{Sec Schrodinger Coulomb}

For $J\in\N^*$, let us consider the higher-order Schr\"{o}dinger
equations
\begin{equation}\label{eq:coulomb}
 i\hbar \frac{\partial \psi}{\partial t} =\mathcal{H}_{0J}\psi+ V_\alpha
 \psi,\quad
\psi(x,0)=\psi_0(x),
\end{equation}
where
\begin{equation}\label{eq:coulombpot}
V_\alpha(x)=\frac{\alpha}{|x|}
\end{equation}
is the --- attractive or repulsive --- Coulomb potential with
coupling constant $\alpha\in\R$. We can prove the existence of a
solution of Equation \eqref{eq:coulomb} by adopting perturbative
arguments based on the Kato--Rellich theorem.

\begin{theorem}\label{theo:coulomb}
Let $\psi_0\in L^2(\R^3)$. Equation~\eqref{eq:coulomb} has a
unique solution $\psi\in C(\R;L^2(\R^3))$, given by
$\psi(x,t)=e^{i\frac{t}{\hbar} \mathcal{H}_J} \psi_0(x)$, where
\begin{equation}
\mathcal{H}_J=\mathcal{H}_{0J}+ V_\alpha=-\sum_{j=0}^J
\frac{\alpha(j)\hbar^{2j}}{m^{2j-1}c^{2j-2}}
\Delta^j + V_\alpha.
\end{equation}
In addition, the following conservation law holds:
\begin{equation*}
 \frac{d}{dt}\|\psi(t)\|_{L^2(\R^3)}^2=0.
\end{equation*}
Finally, the same conclusions hold if $J$ is odd and
$\mathcal{H}_{0J}$ is replaced by
\begin{equation*}
 \mathcal{H}_{0J}+\frac{|x|^2}{2}.
\end{equation*}
\end{theorem}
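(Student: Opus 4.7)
The plan is to realize $\mathcal{H}_J$ as a Kato-Rellich perturbation of the free higher-order operator $\mathcal{H}_J^0 := \mathcal{H}_J - V_\alpha$ and then to invoke Stone's theorem. Via the Fourier transform, $\mathcal{H}_J^0$ acts as multiplication by the real polynomial symbol
\begin{equation*}
P(\xi) = \sum_{j=0}^J (-1)^{j+1} \frac{\alpha(j)\hbar^{2j}}{m^{2j-1}c^{2j-2}} |\xi|^{2j},
\end{equation*}
of degree $2J$ in $|\xi|$ since $\alpha(J)>0$. Because $P$ is real, $\mathcal{H}_J^0$ is self-adjoint on $\{\psi \in L^2(\R^3) : P(\cdot)\hat\psi \in L^2(\R^3)\}$, and this domain coincides with $H^{2J}(\R^3)$ in view of $|P(\xi)|\sim |\alpha(J)|\hbar^{2J}m^{1-2J}c^{2-2J}|\xi|^{2J}$ at infinity.

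The analytic core is to verify that $V_\alpha$ is $\mathcal{H}_J^0$-bounded with relative bound zero. The classical Kato/Hardy inequality on $\R^3$ yields, for every $\eps>0$, a constant $C_\eps$ such that $\|V_\alpha\psi\|_{L^2}\le \eps\|\Delta\psi\|_{L^2}+C_\eps\|\psi\|_{L^2}$. Since $\deg P\ge 2$ (as $J\ge 1$), there exists $K>0$ with $|\xi|^2\le K(|P(\xi)|+1)$ for all $\xi\in\R^3$, hence by Plancherel
\begin{equation*}
\|\Delta\psi\|_{L^2(\R^3)}\le K\bigl(\|\mathcal{H}_J^0\psi\|_{L^2(\R^3)}+\|\psi\|_{L^2(\R^3)}\bigr).
\end{equation*}
Composing the two estimates gives $\|V_\alpha\psi\|_{L^2}\le\eps K\|\mathcal{H}_J^0\psi\|_{L^2}+(\eps K+C_\eps)\|\psi\|_{L^2}$, and since $\eps$ can be taken arbitrarily small, the desired relative bound $0$ follows.

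By the Kato-Rellich theorem, $\mathcal{H}_J=\mathcal{H}_J^0+V_\alpha$ is then self-adjoint on $H^{2J}(\R^3)$. Stone's theorem produces a strongly continuous unitary group $(U(t))_{t\in\R}$ on $L^2(\R^3)$ such that $\psi(t):=U(t)\psi_0$ is the unique solution of \eqref{eq:coulomb} in $C(\R;L^2(\R^3))$, and unitarity is precisely the conservation $\|\psi(t)\|_{L^2}=\|\psi_0\|_{L^2}$, from which the claimed differential identity is immediate. The step I expect to require the most attention is the relative-boundedness estimate, mainly because for even $J\ge 2$ the leading coefficient of $P$ is negative, so $\mathcal{H}_J^0$ is unbounded below; this sign issue blocks any naive quadratic-form argument, but it is harmless for the present approach, which only uses the reality of $P$ for self-adjointness and the growth of $|P|$ for the comparison with $-\Delta$.
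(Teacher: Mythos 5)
Your proposal is correct and follows essentially the same route as the paper: Kato--Rellich applied to $V_\alpha$ as a relatively bounded perturbation of the free higher-order operator (via Hardy's inequality plus the symbol comparison $|\xi|^2\lesssim 1+|P(\xi)|$ and Plancherel), then Stone's theorem for the unitary group and the $L^2$ conservation. The only cosmetic differences are that you obtain self-adjointness of the free operator directly as a real Fourier multiplier rather than by integration by parts, and you quote the $\eps$-form of Hardy/Kato's inequality where the paper derives it by interpolation and Young's inequality.
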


\begin{proof}
The proof relies on the Kato--Rellich Theorem. By using $2J$
integrations by parts, it is easy to prove that the Hamiltonian
$\mathcal{H}_{0J}$, defined in \eqref{eq:H0J}, is essentially
self-adjoint. We will show that $V_\alpha$ is $\mathcal
H_{0J}$-bounded:
\begin{equation}\label{eq:Aborne}
\|V_\alpha \psi\|_{L^2(\R^3)}\le a \|\mathcal{H}_{0J}\psi\|_{L^2(\R^3)}+b
\|\psi\|_{L^2(\R^3)},\quad \forall \psi\in \Sch(\R^3),
\end{equation}
with $a<1$, and where $\Sch$ denotes the Schwartz space. By using
the Kato--Rellich Theorem (see, e.g., \cite[Theorem
X.12]{ReedSimon2}), we deduce that the Hamiltonian
\begin{equation*}
\mathcal{H}_J=\mathcal{H}_{0J}+V_\alpha
\end{equation*}
is a self-adjoint operator and $D(\mathcal{H}_J)=W^{J,2}(\R^3)=
H^J(\R^3)$. Then, we apply the Stone Theorem (see, e.g.,
\cite{ReedSimon2}) to conclude that for $\psi_0\in L^2(\R^3)$, the
Schr\"{o}dinger-type equation
\begin{equation*}
 i\hbar \frac{\partial \psi}{\partial t} =\mathcal{H}_J\psi
\end{equation*}
has a unique solution given by $\psi(x,t)=e^{i\frac{t}{\hbar}
 \mathcal{H}_J} \psi_0(x)$, and
$\|\psi(t)\|_{L^2}=\|\psi_0\|_{L^2}$ for all time $t$. We are left
with the proof of \eqref{eq:Aborne}. From the Hardy inequality
(see, e.g., \cite{BCD11}), there exists $C>0$ such that
\begin{equation*}
 \|V_\alpha \psi\|_{L^2(\R^3)}\le C\|\nabla \psi\|_{L^2(\R^3)}.
\end{equation*}
An integration by parts and the Cauchy--Schwarz inequality yield
\begin{equation*}
 \|V_\alpha \psi\|_{L^2(\R^3)}\le
 C\|\psi\|_{L^2(\R^3)}^{1/2}\|\Delta\psi\|_{L^2(\R^3)}^{1/2}\le
 \frac{C}{\eps}\|\psi\|_{L^2(\R^3)}+C\eps \|\Delta\psi\|_{L^2(\R^3)},
\end{equation*}
where we have used the Young inequality $2ab\le a^2+b^2$ for the
last estimate, and $\eps>0$ is to be fixed later. By considering
$E_J$ as a polynomial in $\textbf{p}$, and distinguishing the
small values of $\textbf{p}$ from the large values of
$\textbf{p}$, we readily check that there exist $C_J>0$ such that
\begin{equation*}
 \textbf{p}^2 \le C_J\(1+ E_J^2\),
\end{equation*}
skipping here irrelevant parameters $m$ and $c$. Using the
Plancherel identity, we infer
\begin{equation}\label{eq:1750}
 \|\Delta \psi\|_{L^2(\R^3)}\le C_J\( \|\psi\|_{L^2(\R^3)}
 +\|\mathcal H_{0J} \psi\|_{L^2(\R^3)}\),\quad \forall \psi\in
 \Sch(\R^3).
\end{equation}
Gathering all the estimates together, we obtain
\begin{equation*}
 \|V_\alpha \psi\|_{L^2(\R^3)}\le
\(\frac{C}{\eps}+ C_JC\eps\) \|\psi\|_{L^2(\R^3)}
 +C_JC\eps \|\mathcal H_{0J} \psi\|_{L^2(\R^3)},\quad \forall \psi\in
 \Sch(\R^3).
\end{equation*}
Choosing $\eps$ sufficiently small, we conclude that $V_\alpha$ is
$\mathcal H_{0J}$-bounded, with a relative bound $a<1$. The case
of a harmonic-oscillator potential follows the same line. If $J$
is odd, then, as noticed in Section~\ref{sec:harmonic-potential},
the symbol of the operator $\mathcal{H}_{0J}+\frac{|x|^2}{2}$ is
bounded from below, and so an inequality analogous to
\eqref{eq:1750} holds:
\begin{equation*}
 \|\Delta \psi\|_{L^2(\R^3)}\le C_J\( \|\psi\|_{L^2(\R^3)}
 +\left\|\(\mathcal H_{0J}
 +\frac{|x|^2}{2}\)\psi\right\|_{L^2(\R^3)}\),
\quad \forall \psi\in \Sch(\R^3),
\end{equation*}
provided that $J$ is odd. We can then conclude as above.
\end{proof}

\section{Higher-order Hartree--Fock equations}\label{Sec HF BP}
\label{sec:cauchyB}

\subsection{Bounded external potential}
\label{sec:boundedpot}
In this section, we study the Cauchy problem associated to
\eqref{HOHF}, in the case where $V$ is a bounded potential.
We denote by
\begin{equation*}
 U_J(t) = e^{-it {\mathcal H}_{0J}}
\end{equation*}
the propagator corresponding to the case $H=F_k=V=0$, where we recall
that ${\mathcal H}_{0J}$ is defined in \eqref{eq:H0J}. Given
$\psi_{01},\dots,\psi_{0N}\in L^2(\R^3)$, we rewrite the Cauchy
problem \eqref{HOHF} with $\psi_{k\mid t=0}=\psi_{0k}$ in an integral
form (Duhamel's principle): for $k=1,\dots, N$,
\begin{equation}
 \label{eq:duhamel1}
 \begin{aligned}
 \psi_k(t) &= U_J(t)\psi_{0k} - i\int_0^t U_J(t-s)\( H
 \psi_k\)(s)ds \\
&\quad+i\int_0^t U_J(t-s)\( F_k(
 \psi_k)\)(s)ds - i\int_0^t U_J(t-s)\( V
 \psi_k\)(s)ds.
\end{aligned}
\end{equation}
We prove the global existence of a unique solution to
\eqref{eq:duhamel1} with initial data in $L^2(\R^3)$, thanks to
dispersive estimates for $U_J$. The corresponding argument is
presented in the case of a bounded potential, and we show in
Section~\ref{Sec HF Coulomb} how it can be adapted to the case of
a Coulomb potential. At the end of Section~\ref{sec:boundedpot},
we show that if the initial data belong to $H^J(\R^3)$ (which, of
course, as $J$ increases, is a stronger and stronger requirement)
and the external potential is sufficiently smooth, then global
existence of a unique solution can be established with more basic
tools than dispersive estimates (an approach which does not seem
to be easily extended to the case of a Coulomb potential though).
See Proposition~\ref{prop:algebre}. \smallbreak

In the case $J=1$ (the Hartree--Fock equation), the existence and
uniqueness~of a solution has been established in \cite{ChGl75}
(see also \cite{CaLB99} for a proof using more recent~tools).
Therefore, we shall focus our presentation on the case $J\ge2$. We
emphasize a difference with the previous results: for $J=1$, the
$H^2$-regularity of the solution to \eqref{HF} is proven by
showing that $\psi_k$ and $\d_t \psi_k$ belong to
$C([0,T];L^2(\R^3))$ and using equation \eqref{HF} to infer that
$\Delta \psi_k \in C([0,T];L^2(\R^3))$, hence $\psi_k \in
C([0,T];H^2(\R^3))$. This is so even in the linear case, see the
proof of Lemma~2.1 in \cite{Yajima87} (a property which is used
also in \cite{CaLB99}). In the case of \eqref{HOHF}, this method
can be adapted to pass from an $L^2$-regularity to an
$H^{2J}$-regularity: this approach will be followed to prove
Theorem~\ref{theo:r3coulomb} below (case of a Coulomb potential).

\begin{theorem}\label{theo:cauchyNL1}
 Let $J\ge 2$, $V\in L^\infty(\R^3)$, $\psi_{01},\dots,\psi_{0N}\in
 L^2(\R^3)$. Then \eqref{eq:duhamel1} has a unique, global, solution
 \begin{equation*}
 (\psi_1,\dots,\psi_N)\in \(C(\R;L^2(\R^3))\cap L^{4J/3}_{\rm
 loc}\(\R;L^\infty(\R^3)\)\)^N.
 \end{equation*}
In addition, the following conservation laws hold, for all
$\ell,k\in\{1,\dots,N\}$:
\begin{equation}\label{eq:conservL2}
 \frac{d}{dt}\int_{\R^3}\overline{\psi_\ell} (t,x)\psi_k(t,x)dx=0.
\end{equation}
\end{theorem}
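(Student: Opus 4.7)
The plan is to run a Banach fixed-point argument on \eqref{eq:duhamel1} based on Strichartz estimates for the higher order propagator $U_J(t)$. Since the leading symbol of $\mathcal H_{0J}$ is proportional to $|\xi|^{2J}$, a stationary phase argument yields the dispersive bound $\|U_J(t)\|_{L^1\to L^\infty}\le C|t|^{-3/(2J)}$, so that the Keel--Tao admissibility condition reads $\frac{2J}{q}+\frac{3}{r}=\frac{3}{2}$. The key observation is that for $J\ge 2$ the endpoint pair $(q,r)=(4J/3,\infty)$ is genuinely admissible (the constraint $q\ge 2$ is met exactly when $J\ge 2$, which is why $J=1$ had to be treated by a different method in \cite{ChGl75,CaLB99}). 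This is what makes the resolution space
\begin{equation*}
  X_T=\Bigl(C([-T,T];L^2(\R^3))\cap L^{4J/3}([-T,T];L^\infty(\R^3))\Bigr)^N
\end{equation*}
available; on a closed ball $B_M\subset X_T$, I would apply the contraction mapping principle to the map $\Phi$ whose $k$-th component is the right-hand side of \eqref{eq:duhamel1}, estimating the $X_T$-norm of the Duhamel integrals in terms of the $L^1([-T,T];L^2(\R^3))$-norm of the source.

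The core of the argument is then to estimate the three source terms $V\psi_k$, $H_k\psi_k$ and $F(\psi_k)$ in $L^1([-T,T];L^2(\R^3))$ with a prefactor carrying a positive power of $T$. The bounded potential is immediate, $\|V\psi_k\|_{L^1_TL^2_x}\le T\|V\|_{L^\infty}\|\psi_k\|_{L^\infty_TL^2_x}$. For the Hartree potential, a splitting of $\int|x-y|^{-1}|\psi|^2(y)\,dy$ over the regions $\{|x-y|\le R\}$ and $\{|x-y|>R\}$, followed by optimization in $R$, yields the interpolation bound
\begin{equation*}
  \Bigl\|\frac{1}{|x|}\ast|\psi_\ell|^2\Bigr\|_{L^\infty(\R^3)}\le C\|\psi_\ell\|_{L^2(\R^3)}^{4/3}\|\psi_\ell\|_{L^\infty(\R^3)}^{2/3},
\end{equation*}
whence $\|H_k\psi_k\|_{L^2_x}\le C\sum_\ell\|\psi_\ell\|_{L^2}^{4/3}\|\psi_\ell\|_{L^\infty}^{2/3}\|\psi_k\|_{L^2}$. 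H\"older's inequality in time on the $L^\infty_x$-factor then produces a prefactor $T^{1-1/(2J)}$, with a strictly positive exponent precisely because $J\ge 2$. The Fock term $F(\psi_k)$ is treated analogously, using $|\overline{\psi_\ell}\psi_k|(y)\le|\psi_\ell(y)||\psi_k(y)|$ inside the convolution. Choosing $T$ small makes $\Phi$ a contraction on $B_M$, and local well-posedness in $X_T$ follows.

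To globalize the solution, I would establish the bilinear conservation \eqref{eq:conservL2}. Computing formally, then justifying the computation by a regularization of the initial data and passage to the limit using the local theory,
\begin{equation*}
\begin{aligned}
  i\hbar\frac{d}{dt}\int_{\R^3}\overline{\psi_\ell}\,\psi_k\,dx
  &=\int_{\R^3}\bigl[\overline{\psi_\ell}\,\mathcal H_{0J}\psi_k-(\mathcal H_{0J}\overline{\psi_\ell})\psi_k\bigr]dx\\
  &\quad +\int_{\R^3}(H_k-H_\ell)\overline{\psi_\ell}\,\psi_k\,dx+\int_{\R^3}\overline{F(\psi_\ell)}\,\psi_k\,dx-\int_{\R^3}\overline{\psi_\ell}\,F(\psi_k)\,dx.
\end{aligned}
\end{equation*}
The first bracket vanishes by essential self-adjointness of $\mathcal H_{0J}$ (already exploited in Section~\ref{Sec Schrodinger Coulomb}), the $V$-contributions cancel trivially and have been omitted, the Hartree integral vanishes because $H_k$ does not depend on the index $k$, and the two Fock integrals cancel by a Fubini/relabelling argument that exploits the evenness of the kernel $1/|x|$. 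Specialising to $\ell=k$ gives $\|\psi_k(t)\|_{L^2}=\|\psi_{0k}\|_{L^2}$, and this uniform a priori bound allows the local solution to be continued to all $t\in\R$.

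The main obstacle is conceptual rather than computational: it is the recognition that for $J\ge 2$ the endpoint Strichartz pair $(4J/3,\infty)$ becomes admissible, which alone makes the $L^2$-based proof above close. Without it, one would be forced back to the $H^{2J}$-regularity route alluded to by the authors just before the statement, and that route becomes progressively heavier as $J$ grows.
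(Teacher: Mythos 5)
Your proposal is correct and follows essentially the same route as the paper: the local-in-time dispersive bound $|t|^{-3/(2J)}$ for $U_J$ (cited from Kim--Arnold--Yao there, sketched via stationary phase by you) plus Keel--Tao gives the admissible pair $(4J/3,\infty)$ for $J\ge 2$, a contraction is run in $L^\infty([0,T];L^2)\cap L^{4J/3}([0,T];L^\infty)$ with the bounded potential absorbed trivially, and the conservation of the $L^2$ norms lets the local solution be iterated to all of $\R$. The only minor difference is technical: you bound $\||x|^{-1}\ast(\phi_1\phi_2)\|_{L^\infty}$ by splitting the convolution kernel at radius $R$ and optimizing, whereas the paper uses H\"older plus Hardy--Littlewood--Sobolev (Lemma~\ref{lem:trilin}) followed by interpolation against $L^2$ and $L^\infty$; both give the same $\|\psi\|_{L^2}^{7/3}\|\psi\|_{L^\infty}^{2/3}$-type control with a positive power of $T$ (which, note, is positive for every $J\ge 1$ --- the role of $J\ge 2$ is only the admissibility of $r=\infty$, as you correctly identify earlier).
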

This result will follow from Lemma~\ref{lem:local1} in
Subsection~\ref{sec:fixed} below.
\begin{remark}
 The space $L^{4J/3}_{\rm loc}\(\R;L^\infty(\R^3)\)$ is mentioned in
 order to guarantee uniqueness. Other spaces based on the
 Strichartz-type estimates presented below would do the job as well.
\end{remark}

\subsubsection{Dispersive estimates and consequences}
\label{sec:disp}

From \cite[Theorem~4.1]{Kim12}, we have the following
local-in-time dispersive estimate. There exists $C>0$ such that
\begin{equation*}
 \|U_J(t)\|_{L^1(\R^3)\to L^\infty(\R^3)} \le
 \frac{C}{|t|^{3/(2J)}},\quad 0<|t|\le 1.
\end{equation*}
Formally, this estimate is the same as the one associated to the usual
Schr\"odinger group $e^{it\Delta}$ ($J=1$) on $\R^n$, with
$n=3/J$. This remark is purely algebraic, since $n$ need not be an
integer. Large-time decay properties for $U_J(t)$ (with a
different rate) are also established in \cite[Theorem~4.1]{Kim12},
but we shall not need them here. Invoking \cite[Theorem~1.2]{KT},
we infer the following lemma.
\begin{lemma}[Local Strichartz estimates]
 \label{lem:Strichartz}
Let $J\ge 2$, and $(q_1,r_1)$, $ (q_2,r_2)$ be admissible pairs,
in the sense that they satisfy
\begin{equation}
 \label{eq:adm}
 \frac{2}{q} = n\(\frac{1}{2}-\frac{1}{r}\),\quad 2\le r\le
 \infty,\quad n=\frac{3}{J}.
\end{equation}
Let $I$
be some finite time interval, of length at most one, $|I|\le 1$. \\
$1.$ There exists $C=C(r_1)$ such that for
all $\phi \in
L^2(\R^3)$,
\begin{equation}\label{eq:strich}
 \left\| U_J(\cdot)\phi \right\|_{L^{q_1}(I;L^{r_1}(\R^3))}\le C
 \|\phi \|_{L^2(\R^3)}.
 \end{equation}
$2.$ If $I$ contains the origin, $0\in I$, denote
\begin{equation*}
 D_I(f)(t,x) = \int_{I\cap\{s\le
 t\}} U_J(t-s)f(s,x)ds.
\end{equation*}
There exists $C=C(r_1,r_2)$ such that for all $f\in
L^{q'_2}(I;L^{r'_2})$,
\begin{equation}\label{eq:strichnl}
 \left\lVert D_I(f)
 \right\rVert_{L^{q_1}(I;L^{r_1}(\R^3))}\le C \left\lVert
 f\right\rVert_{L^{q'_2}\(I;L^{r'_2}(\R^3)\)},
\end{equation}
where $p'$ stands for the H\"older conjugate exponent of $p$, $\dis
\frac{1}{p}+\frac{1}{p'}=1$.
\end{lemma}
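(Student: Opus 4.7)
The plan is to apply directly the abstract Strichartz framework of \cite{KT}, for which we need two inputs on the propagator $U_J$. The first is the $L^2$-boundedness (unitarity): since $\mathcal{H}_{0J}$ is essentially self-adjoint on $L^2(\R^3)$ (by the $2J$ integration-by-parts argument already recalled in the proof of the previous theorem), Stone's theorem gives that $U_J(t)$ is a unitary group on $L^2(\R^3)$, so in particular $\|U_J(t)\phi\|_{L^2}=\|\phi\|_{L^2}$ for every $t\in\R$. The second is the dispersive estimate
$$\|U_J(t)\|_{L^1(\R^3)\to L^\infty(\R^3)}\le C|t|^{-3/(2J)},\quad 0<|t|\le 1,$$
quoted from \cite[Theorem~4.1]{Kim12}. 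Writing $3/(2J)=n/2$ with $n=3/J$, we are exactly in the situation of \cite[Theorem~1.2]{KT} with ``abstract dimension'' $n$. The non-integrality of $n$ is irrelevant, since the Keel--Tao argument is purely interpolation-theoretic: only the exponent in the dispersive estimate enters, not any geometric property of the underlying space.

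The abstract theorem then produces both the homogeneous bound \eqref{eq:strich} and the inhomogeneous (retarded) bound \eqref{eq:strichnl} for every pair of admissible exponents in the sense of \eqref{eq:adm}, under the assumption that the dispersive bound holds globally in time. We adapt this to our local-in-time setting by restricting every integral to the interval $I$ of length $|I|\le 1$: in the $TT^*$ dualization of the estimate, the kernel $|t-s|^{-n/2}$ that arises from the dispersive bound is only evaluated for $t,s\in I$, so that $|t-s|\le 1$ and the hypothesis of \cite{Kim12} is met. The Hardy--Littlewood--Sobolev step following the $L^1$--$L^\infty$/$L^2$--$L^2$ interpolation is then applied to this truncated kernel on $I\times I$, which introduces no new difficulty beyond producing a constant that depends on $r_1,r_2$ but not on the particular $I$.

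Observe that the genuinely delicate endpoint case of Keel--Tao does not occur here: since $J\ge 2$ gives $n=3/J\le 3/2<2$, the admissibility condition $2/q=n(1/2-1/r)$ with $2\le r\le \infty$ never produces the borderline pair $(q,r)=(2,2n/(n-2))$. In particular the case $r=\infty$, corresponding to $q=4J/3$ and providing precisely the space appearing in Theorem~\ref{theo:cauchyNL1}, lies safely in the non-endpoint range and is reached by the elementary $TT^*$ plus Hardy--Littlewood--Sobolev argument. The main obstacle is therefore essentially bookkeeping: the one point requiring any care is to verify that the time localization to $|I|\le 1$ is compatible with the abstract scheme, which it plainly is, since the dispersive estimate is used only at arguments $|t-s|\le 1$ throughout the proof.
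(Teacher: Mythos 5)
Your proposal follows essentially the same route as the paper: the paper simply combines the local-in-time dispersive estimate of \cite[Theorem~4.1]{Kim12} with the abstract Keel--Tao result \cite[Theorem~1.2]{KT}, treating $n=3/J$ as a purely algebraic ``dimension'' and noting (as you do) that $r=\infty$ is unproblematic since $n<2$. Your additional remarks on the time localization $|I|\le 1$ and the absence of the endpoint case are correct and only make explicit what the paper leaves implicit.
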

\begin{remark}
 The value $r = \infty$ is always allowed in the present
 context, because we morally consider a Schr\"odinger equation in
 space dimension $n<2$.
\end{remark}
\subsubsection{The fixed-point argument}
\label{sec:fixed}
In order to unify the treatment of the terms $H$ and $F$ in
\eqref{eq:duhamel1}, consider the trilinear operator
\begin{equation*}
 {\mathbf T}(\phi_1,\phi_2,\phi_3) = \(\frac{1}{|x|}\ast
 \(\phi_1\phi_2\)\)\phi_3.
\end{equation*}
\begin{lemma}\label{lem:trilin}
 There exists $C>0$ such that for all $\phi_1,\phi_2,\phi_3\in
C_0^\infty(\R^3)$,
 \begin{equation*}
 \|{\mathbf T}(\phi_1,\phi_2,\phi_3)\|_{L^2(\R^3)}\le C
\lVert \phi_1\rVert_{L^{24/11}(\R^3)}\lVert \phi_2\rVert_{L^{24/11}(\R^3)}
\lVert \phi_3\rVert_{L^4(\R^3)}.
 \end{equation*}
\end{lemma}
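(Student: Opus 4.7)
The plan is to chain together two applications of H\"older's inequality with a single application of the Hardy--Littlewood--Sobolev inequality, exploiting the fact that convolution against $1/|x|$ in $\R^3$ is (up to a constant) the Riesz potential $I_2$.

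First, I would use H\"older on the pointwise product defining $\mathbf{T}$. Writing $1/2 = 1/4 + 1/4$, we get
\begin{equation*}
  \|\mathbf{T}(\phi_1,\phi_2,\phi_3)\|_{L^2(\R^3)} \le
  \left\|\frac{1}{|x|} \ast (\phi_1\phi_2)\right\|_{L^{4}(\R^3)} \|\phi_3\|_{L^4(\R^3)},
\end{equation*}
which isolates the $L^4$ norm of $\phi_3$ as required and reduces matters to estimating the convolution in $L^4$.

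Next, I would apply Hardy--Littlewood--Sobolev (see e.g.\ \cite[Chapter 4]{LiebLoss}): in $\R^3$, convolution with $1/|x|$ maps $L^r$ into $L^q$ whenever $1/q = 1/r - 2/3$, with $1 < r < q < \infty$. Taking $q=4$ forces $r = 12/11$, so
\begin{equation*}
  \left\|\frac{1}{|x|} \ast (\phi_1\phi_2)\right\|_{L^{4}(\R^3)} \le C
  \|\phi_1\phi_2\|_{L^{12/11}(\R^3)}.
\end{equation*}

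Finally, a second H\"older splits this norm symmetrically: since $11/12 = 11/24 + 11/24$,
\begin{equation*}
  \|\phi_1\phi_2\|_{L^{12/11}(\R^3)} \le
  \|\phi_1\|_{L^{24/11}(\R^3)}\|\phi_2\|_{L^{24/11}(\R^3)},
\end{equation*}
and combining the three inequalities yields the claim. The argument is purely harmonic-analytic and requires no special structure from the higher order Schr\"odinger framework; the only point one must check is the arithmetic of the exponents, and the main (mild) obstacle is simply verifying that the HLS exponent $r = 12/11$ sits strictly between $1$ and $q = 4$, so that HLS applies in its strong form. The density of $C_0^\infty$ in the relevant Lebesgue spaces then extends the estimate as stated.
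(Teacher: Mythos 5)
Your argument is correct and is essentially the same as the paper's: H\"older to split off $\|\phi_3\|_{L^4}$, Hardy--Littlewood--Sobolev in $\R^3$ to pass from $L^4$ of the convolution to $\|\phi_1\phi_2\|_{L^{12/11}}$, and a second H\"older to reach the two $L^{24/11}$ norms, with the exponent arithmetic checking out exactly as you state.
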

\begin{proof}
The H\"older inequality yields
 \begin{equation*}
 \|{\mathbf T}(\phi_1,\phi_2,\phi_3)\|_{L^2(\R^3)}\le
\left\|\frac{1}{|x|}\ast
 \(\phi_1\phi_2\)\right\|_{L^4(\R^3)}
\|\phi_3\|_{L^4(\R^3)}.
 \end{equation*}
Since $x\in \R^3$, the Hardy--Littlewood--Sobolev inequality (see,
e.g., \cite{BCD11}) yields
\begin{equation*}
 \left\|\frac{1}{|x|}\ast
 \(\phi_1\phi_2\)\right\|_{L^4(\R^3)}\le C \|\phi_1\phi_2\|_{L^{12/11}(\R^3)},
\end{equation*}
and the lemma follows from the H\"older inequality.
\end{proof}
\begin{lemma}\label{lem:local1}
 Let $\psi_{01},\dots,\psi_{0N}\in L^2(\R^3)$. There exists $T>0$
 depending on $\psi_{01},\dots,\psi_{0N}$ only through
 $\|\psi_{01}\|_{L^2},\dots , \|\psi_{0N}\|_{L^2}$ such that
 \eqref{eq:duhamel1} has a unique solution
 \begin{equation*}
 (\psi_1,\dots,\psi_N)\in \(C([0,T];L^2(\R^3))\cap L^{4J/3}\([0,T]
;L^\infty(\R^3)\)\)^N.
 \end{equation*}
\end{lemma}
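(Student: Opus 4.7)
The plan is a contraction mapping argument for the Duhamel integral equation \eqref{eq:duhamel1} on the complete metric space
\begin{equation*}
Y_T := \bigl(C([0,T];L^2(\R^3)) \cap L^{4J/3}([0,T];L^\infty(\R^3))\bigr)^N,
\end{equation*}
for some $T\in(0,1]$ to be fixed later. A key preliminary observation is that the pair $(q,r)=(4J/3,\infty)$ is admissible in the sense of \eqref{eq:adm} when $J\ge 2$: one checks $2/q=3/(2J)=n/2$ with $n=3/J\le 3/2$, and the endpoint $r=\infty$ is permitted in this low effective dimension, as highlighted in the remark following Lemma~\ref{lem:Strichartz}. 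Applying the homogeneous estimate \eqref{eq:strich} successively with the two admissible pairs $(\infty,2)$ and $(4J/3,\infty)$, I obtain $\|U_J(\cdot)\psi_{0k}\|_{Y_T}\le C\|\psi_{0k}\|_{L^2}$.

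For the Duhamel contributions I appeal to the inhomogeneous estimate \eqref{eq:strichnl} with the same two output pairs and source pair $(q_2',r_2')=(1,2)$, so that it suffices to bound every source term in $L^1([0,T];L^2(\R^3))$. The potential piece is immediate: $\|V\psi_k\|_{L^1_TL^2}\le T\|V\|_{L^\infty}\|\psi_k\|_{L^\infty_TL^2}$. For the Hartree term ${\mathbf T}(\overline{\psi_\ell},\psi_\ell,\psi_k)$ and the Fock term ${\mathbf T}(\overline{\psi_\ell},\psi_k,\psi_\ell)$, Lemma~\ref{lem:trilin} bounds the $L^2_x$-norm by a product of $L^{24/11}_x$ and $L^4_x$ norms, which I interpolate between $L^2_x$ and $L^\infty_x$ via
\begin{equation*}
\|\psi\|_{L^{24/11}}\le\|\psi\|_{L^2}^{11/12}\|\psi\|_{L^\infty}^{1/12},
\qquad
\|\psi\|_{L^4}\le\|\psi\|_{L^2}^{1/2}\|\psi\|_{L^\infty}^{1/2}.
\end{equation*}
In both the Hartree and Fock cases the combined $L^\infty_x$-exponent sums to $2/3$, so H\"older in time (using that $\|\psi\|_{L^\infty_x}\in L^{4J/3}_t$) produces
\begin{equation*}
\|{\mathbf T}(\overline{\psi_\ell},\psi_\ell,\psi_k)\|_{L^1_TL^2}
+\|{\mathbf T}(\overline{\psi_\ell},\psi_k,\psi_\ell)\|_{L^1_TL^2}
\le C\,T^{(2J-1)/(2J)}\,\|\psi_\ell\|_{Y_T}^{2}\|\psi_k\|_{Y_T}.
\end{equation*}
The decisive feature is that the $T$-prefactor vanishes as $T\to 0$.

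Assembling these bounds and summing over the $N$ terms comprising $H_k$ and $F(\psi_k)$, the map $\Phi$ sending $(\psi_k)_k$ to the right-hand side of \eqref{eq:duhamel1} satisfies, on the closed ball $B_R\subset Y_T$,
\begin{equation*}
\|\Phi(\psi)\|_{Y_T}\le C_0\sum_k\|\psi_{0k}\|_{L^2}
+C\bigl(T^{(2J-1)/(2J)}R^3+T\|V\|_{L^\infty}R\bigr),
\end{equation*}
and an analogous Lipschitz estimate for $\Phi(\psi)-\Phi(\tilde\psi)$ obtained by expanding each trilinear difference as a telescoping sum of three terms each carrying one factor $\psi-\tilde\psi$. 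Setting $R=2C_0\sum_k\|\psi_{0k}\|_{L^2}$ and then $T\in(0,1]$ small enough in terms of $R$, $\|V\|_{L^\infty}$ and $J$ only, $\Phi$ becomes a strict self-contraction of $B_R$, and Banach's theorem yields the unique fixed point in $Y_T$. The only delicate points are the endpoint admissibility of $(4J/3,\infty)$, which forces the restriction $J\ge 2$, and the choice of interpolation exponents reconciling Lemma~\ref{lem:trilin} with the Strichartz norms so as to produce a vanishing $T$-factor; once these are settled, the fixed-point step is routine H\"older bookkeeping.
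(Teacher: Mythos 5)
Your proposal is correct and follows essentially the same route as the paper: a contraction argument in $C([0,T];L^2)\cap L^{4J/3}([0,T];L^\infty)$ based on the local Strichartz estimates of Lemma~\ref{lem:Strichartz} with the admissible pair $(4J/3,\infty)$, the trilinear bound of Lemma~\ref{lem:trilin}, the same interpolation exponents $\|\psi\|_{L^{24/11}}\le\|\psi\|_{L^2}^{11/12}\|\psi\|_{L^\infty}^{1/12}$, $\|\psi\|_{L^4}\le\|\psi\|_{L^2}^{1/2}\|\psi\|_{L^\infty}^{1/2}$, and H\"older in time to extract a positive power of $T$ (the only cosmetic differences being that you work on a closed ball rather than the set $X_T$, and your time exponent $T^{(2J-1)/(2J)}$ is the correct bookkeeping, while in either form only its positivity matters).
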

\begin{proof}
Denote by $\Phi_k(\psi_1,\dots,\psi_L)$ the right-hand side of
\eqref{eq:duhamel1}, and for $T>0$, let
\begin{align*}
 X_T=\{ & (\psi_1,\dots,\psi_N) \in L^\infty([0,T];L^2(\R^3))^N ;\quad
 \|\psi_k\|_{L^\infty([0,T];L^2(\R^3))}\le 2\|\psi_{0k}\|_{L^2},\\
&
 \|\psi_k\|_{L^{4J/3}([0,T];L^\infty(\R^3))}\le
 2C_\infty\|\psi_{0k}\|_{L^2},\quad k=1,\dots,N\},
\end{align*}
where the constant $C_\infty$ stems from \eqref{eq:strich} in the case
$r_1=\infty$.
 The lemma follows from a standard fixed-point argument: for $T>0$
 sufficiently small (depending on $\|\psi_{01}\|_{L^2},\dots ,
 \|\psi_{0N}\|_{L^2}$), all the $\Phi_k$'s leave $X_T$ invariant, and
 are contractions on that space.

From Lemma~\ref{lem:Strichartz}, and denoting by
$L^q_TL^r=L^q([0,T];L^r(\R^3))$, we have
\begin{equation*}
 \|\Phi_k\|_{L^\infty_TL^2} \le \|\psi_{0k}\|_{L^2} + C
 \|H\psi_k\|_{L^1_TL^2} + C \|F_k (\psi_k)\|_{L^1_TL^2} + \|V\psi_k\|_{L^1_TL^2}.
\end{equation*}
Lemma~\ref{lem:trilin} and the boundedness of $V$ yield
\begin{align*}
 \|\Phi_k\|_{L^\infty_TL^2} &\le \|\psi_{0k}\|_{L^2} +
 C\sum_{\ell=1}^N\Big\|\|\psi_k(t)\|_{L^{24/11}}
 \|\psi_\ell(t)\|_{L^{24/11}}
\|\psi_\ell(t)\|_{L^4}\Big\|_{L^1_T}\\
&\quad + C\| \psi_k\|_{L^1_TL^2}.
\end{align*}
The last term is readily estimated by $CT \|
\psi_k\|_{L^\infty_TL^2}$. Each term of the sum is controlled by
\begin{equation*}
 \Big\|\|\psi_k(t)\|_{L^\infty}^{1/12}\|\psi_k(t)\|_{L^2}^{11/12}
\|\psi_\ell(t)\|_{L^\infty}^{1/12}\|\psi_\ell(t)\|_{L^2}^{11/12}
\|\psi_\ell(t)\|_{L^\infty}^{1/2}\|\psi_\ell(t)\|_{L^2}^{1/2}
\Big\|_{L^1_T}.
\end{equation*}
Neglecting the indices $k$ and $\ell$, which are irrelevant at
this step of the analysis, the H\"older inequality in time yields
\begin{equation*}
\Big\|\|\psi(t)\|_{L^\infty}^{2/3}\|\psi(t)\|_{L^2}^{7/3}
\Big\|_{L^1_T} \le \|\psi\|_{L^\infty_TL^2}^{7/3} \|\psi
\|_{L^{4J/3}_TL^\infty}^{2/3} T^{(2J-1)/(2J)},
\end{equation*}
and we come up with an estimate of the form
\begin{equation*}
\|\Phi_k\|_{L^\infty_TL^2} \le \|\psi_{0k}\|_{L^2} +
C\(\|\psi_{01}\|_{L^2},\dots,\|\psi_{0N}\|_{L^2}\) \(T + T^{(2J-1)/(2J)}\).
\end{equation*}
Choosing $T>0$ sufficiently small, the right-hand side does not
exceed $2\|\psi_{0k}\|_{L^2}$, uniformly in $k$. Similarly,
Lemma~\ref{lem:Strichartz} yields
\begin{equation*}
 \|\Phi_k\|_{L^{4J/3}_TL^\infty} \le C_\infty \|\psi_{0k}\|_{L^2} + C
 \|H\psi_k\|_{L^1_TL^2} + C \|F_k (\psi_k)\|_{L^1_TL^2} +C \|V\psi_k\|_{L^1_TL^2},
\end{equation*}
and so, $X_T$ is invariant under the action of $\Phi$ provided that
$T>0$ is sufficiently small.

Up to diminishing $T$, contraction follows readily, since $\mathbf
T$ is a trilinear operator. So, there exists a unique (in $X_T$)
fixed point for $\Phi$, that is, a solution to
\eqref{eq:duhamel1}. Uniqueness in the larger space
$\(C([0,T];L^2(\R^3))\cap L^{4J/3}\([0,T] ;L^\infty(\R^3)\)\)^N$
follows from the same estimates.
\end{proof}
Since the $L^2$ norm of $\psi_k$, $k=1,\dots,N$ is invariant under the
flow of \eqref{HOHF} (like in the case $J=1$), the above local
existence result can be iterated indefinitely in order to cover any
arbitrary time interval, and Theorem~\ref{theo:cauchyNL1} follows.

\subsubsection{Higher-order regularity}

We infer the propagation of higher-order Sobolev regularity, which
essentially reflects the fact that \eqref{HOHF} is
$L^2$-subcritical, and the nonlinearity is smooth. Roughly
speaking, the point is to differentiate \eqref{eq:duhamel1} with
respect to the space variable (such derivatives commute with
$U_J$), and use the fact that the nonlinearity is a trilinear
operator, along with Sobolev embedding.
\begin{corollary}\label{cor:higher}
 Let $s\in \N$. Suppose that $V\in W^{s,\infty}(\R^3)$, and that
 $\psi_{0k}\in H^s(\R^3)$, $k=1,\dots,N$. Then the solution to
 \eqref{HOHF} provided by Theorem~\ref{theo:cauchyNL1} satisfies
 \begin{equation*}
 \psi_k\in C(\R;H^s(\R^3)),\quad
 k=1,\dots,N.
 \end{equation*}
If $s\ge J$, then we have in addition:
\begin{equation*}
 \psi_k\in L^\infty(\R;H^{J}(\R^3)),\quad
 k=1,\dots,N.
 \end{equation*}
\end{corollary}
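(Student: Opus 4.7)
The plan is to upgrade the Strichartz-based contraction of Lemma~\ref{lem:local1} to the Sobolev setting. Since spatial derivatives commute with $U_J$, applying $\d^\alpha$ with $|\alpha|\le s$ to \eqref{eq:duhamel1} yields
\begin{equation*}
\d^\alpha\psi_k(t)=U_J(t)\d^\alpha\psi_{0k}-i\int_0^t U_J(t-\tau)\,\d^\alpha\bigl(H_k\psi_k-F(\psi_k)+V\psi_k\bigr)(\tau)\,d\tau,
\end{equation*}
and I would run a fixed point argument analogous to that of Lemma~\ref{lem:local1} in the space $X_T^s=\bigl(L^\infty([0,T];H^s(\R^3))\cap L^{4J/3}([0,T];W^{s,\infty}(\R^3))\bigr)^N$, using Lemma~\ref{lem:Strichartz} applied to $\d^\alpha\psi_k$ and the corresponding admissible pairs.

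The bulk of the work is to estimate $\d^\alpha$ of the nonlinearity in $L^1_TL^2$. For the trilinear piece, Leibniz's rule expands $\d^\alpha\mathbf{T}(\phi_1,\phi_2,\phi_3)$ into a finite sum of terms of the form $\mathbf{T}(\d^{\alpha_1}\phi_1,\d^{\alpha_2}\phi_2,\d^{\alpha_3}\phi_3)$ with $|\alpha_1|+|\alpha_2|+|\alpha_3|=|\alpha|$; placing the factor carrying the largest derivative order in the $L^2$-controlled slot of Lemma~\ref{lem:trilin} and interpolating the remaining $L^{24/11}$ and $L^4$ norms between $L^2$ (from $L^\infty_TL^2$) and $L^\infty$ (from $L^{4J/3}_TL^\infty$) reproduces exactly the same Hölder-in-time structure as in the proof of Lemma~\ref{lem:local1}, with $H^s$ and $W^{s,\infty}$ simply replacing $L^2$ and $L^\infty$. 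The potential term $\d^\alpha(V\psi_k)$ is absorbed by $\|V\|_{W^{s,\infty}}\|\psi_k\|_{H^s}$ via Leibniz. Since the length of the local interval produced by this contraction depends only on the conserved $L^2$ norms of the data, I would iterate on intervals of uniform size to cover all of $\R$, and continuity in $H^s$ is then read off the Duhamel identity itself, yielding the first statement.

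For the second statement, the plan is to exploit the formally conserved energy
\begin{equation*}
\mathcal E(t)=\sum_k\<\psi_k,\mathcal H_{0J}\psi_k\>+\mathcal E_{HF}(\psi_1,\dots,\psi_N)+\sum_k\<\psi_k,V\psi_k\>,
\end{equation*}
where $\mathcal E_{HF}$ denotes the standard quartic Hartree--Fock functional associated to the $H_k$ and $F$ terms. The Fourier symbol $P_J(|\xi|^2)=mc^2+\sum_{j=1}^J(-1)^{j+1}\alpha(j)\hbar^{2j}|\xi|^{2j}/(m^{2j-1}c^{2j-2})$ of $\mathcal H_{0J}$ satisfies $|P_J(|\xi|^2)|\gtrsim(1+|\xi|^2)^J$ for large $|\xi|$, so that a high-/low-frequency splitting gives $\|\psi\|_{H^J}^2\lesssim|\<\psi,\mathcal H_{0J}\psi\>|+\|\psi\|_{L^2}^2$. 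Bounding $\mathcal E_{HF}$ by $\|\psi_k\|_{L^2}$ and $\|\psi_k\|_{H^1}$ through Hardy--Littlewood--Sobolev and interpolation, and combining with conservation of $\mathcal E$, conservation of the $L^2$ norms given by \eqref{eq:conservL2}, and the $L^\infty$ bound on $V$, a Young-type inequality then yields the uniform bound $\|\psi_k(t)\|_{H^J}\le C$ for all $t\in\R$. The hard part will be the sign indefiniteness of $\<\psi,\mathcal H_{0J}\psi\>$: when $J$ is even, the top-order coefficient of $P_J$ is negative, so the high/low splitting must be carried out carefully enough to convert the conserved (potentially large and negative) energy into a genuine two-sided control of $\|\psi_k\|_{H^J}$, without the quartic nonlinear contribution consuming the available coercivity.
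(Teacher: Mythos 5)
Your proposal follows essentially the same route as the paper: the first statement by differentiating Duhamel's formula and re-running the Strichartz/trilinear fixed-point estimates with $H^s$ and $W^{s,\infty}$ in place of $L^2$ and $L^\infty$ (the paper delegates these details to Castella), and the uniform $H^J$ bound from the conserved energy, with the indefinite sign of the top-order symbol for even $J$ neutralized by showing the quartic Hartree term is subcritical relative to $\dot H^J$ --- you via Hardy--Littlewood--Sobolev plus Sobolev embedding and interpolation, the paper via the Lieb--Yau inequality, which is the same mechanism since $J\ge 2>3/8$. The one point to tighten is your claim that the $H^s$-level contraction time depends only on the conserved $L^2$ norms: as set up, with all three factors of the Leibniz-expanded trilinear term measured in the top norms, the time would depend on the $H^s$ data, so one should either use tame (Gagliardo--Nirenberg-interpolated) estimates that are linear in the top norm, or prove persistence of regularity by a Gronwall argument on the $L^2$ local interval furnished by Lemma~\ref{lem:local1}, as in the argument the paper cites.
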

\begin{proof}
We refer to the proof of \cite[Theorem~8.1]{CastoM3AS97} for
precise details concerning the proof of the first statement. In
the case $s\ge J$, we take advantage of the Hamiltonian structure
of \eqref{HOHF}. The quantity
\begin{equation}\label{eq:EHF}
\begin{aligned}
 {\mathcal E}_{\rm HF}& = \sum_{k=1}^N \<\psi_k,{\mathcal H}_{0J}\psi_k\>
 +\int_{\R^3}V(x)\rho_\Psi(x)dx\\
&\quad +
 \frac{\kappa}{2} \iint_{\R^3\times\R^3}
 \frac{\rho_\Psi(x)\rho_\Psi(y)-|\rho_\Psi(x,y)|^2}{|x-y|}dxdy,
\end{aligned}
\end{equation}
is formally independent of time, where
\begin{equation}\label{eq:rho}
 \rho_\Psi(x,y) = \sum_{k=1}^N \psi_k(x)\overline{\psi_k}(y),\text{
 and }
 \rho_\Psi(x)=\rho_\Psi(x,x),
\end{equation}
and we recall that we have denoted
\begin{equation*}
 \mathcal{H}_{0J}=-\sum_{j=0}^J
\frac{\alpha(j)\hbar^{2j}}{m^{2j-1}c^{2j-2}} \Delta^j
=-\sum_{j=0}^J (-1)^j\frac{\alpha(j)\hbar^{2j}}{m^{2j-1}c^{2j-2}}
(-\Delta)^j ,
\end{equation*}
where the last equality stresses the fact that $-\Delta$ is a
positive operator. In view of the Cauchy--Schwarz inequality, the
integral on $\R^3\times \R^3$ in ${\mathcal E}_{\rm HF}$ is
nonnegative. At leading order (in terms of regularity),
\begin{equation*}
 \<\psi_k,{\mathcal H}_{0J}\psi_k\> =
 (-1)^{J+1}\frac{\alpha(J)\hbar^{2J}}{m^{2J-1}c^{2J-2}}
 \|(-\Delta)^{J/2}\psi_k\|_{L^2}^2+\text{ l.o.t.}
\end{equation*}
In view of the conservation of the $L^2$ norm, we infer that if
$(-1)^{J+1}$ and $\kappa$ have the same sign, then the conservation of
${\mathcal E}_{\rm HF}$ yields an a priori bound of the form
\begin{equation}\label{aprioriHJ}
 \psi_k\in L^\infty(\R;H^{J}(\R^3)),\quad k=1,\dots,N.
\end{equation}
In passing, we have used the following interpolation estimates, for
$0\le s\le J$:
\begin{equation*}
 \| (-\Delta)^{s/2}\psi\|_{L^2}\le C
 \|\psi\|_{L^2}^{1-s/J}\|(-\Delta)^{J/2}\psi\|_{L^2}^{s/J}.
\end{equation*}
If $(-1)^{J+1}$ and $\kappa$ have different signs, we recall from
 \cite{LiYa87} the estimate
 \begin{align*}
 \iint_{\R^3\times\R^3}
 \frac{\rho_\Psi(x)\rho_\Psi(y)}{|x-y|}dxdy&\le C
 \(\int_{\R^3}\rho_\Psi(x)dx\)^{2/3}
 \(\int_{\R^3}\rho_\Psi^{4/3}(x)dx\)\\
&\le C
 \sum_{k=1}^N\|\psi_k\|_{L^{8/3}(\R^3)}^{8/3}\le C
 \sum_{k=1}^N\|\psi_k\|_{H^{3/8}(\R^3)}^{8/3},
 \end{align*}
where we have used the conservation of the $L^2$-norm and Sobolev
embedding, successively. Therefore, the leading order in the
``kinetic'' part always dominates the potential part ($J>3/8$),
and \eqref{aprioriHJ} is always true. Finally, the conservation
of~${\mathcal E}_{\rm HF}$ can be rigorously established by the
following classical arguments (see, e.g., \cite{CazCourant}).
\end{proof}

To conclude this section, we sketch a more direct proof of the
above result.
\begin{proposition}\label{prop:algebre}
 Let $s\in \N$, with $s\ge J\ge 2$. Suppose that $V\in W^{s,\infty}(\R^3)$, and that
 $\psi_{0k}\in H^s(\R^3)$, $k=1,\dots,N$. Then
 \eqref{HOHF} has a unique, global, solution
 \begin{equation*}
 \psi_k\in C(\R;H^s(\R^3))\cap L^\infty(\R;H^{J}(\R^3)),\quad
 k=1,\dots,N,
 \end{equation*}
with initial data $\psi_{0k}$.
\end{proposition}
\begin{proof}[Sketch of the proof]
 Since $s\ge J\ge 2$, $H^s(\R^3)$ is a Banach algebra, continuously
 embedded into $L^p(\R^3)$ for all $p\in [2,\infty]$. We have
 \begin{equation*}
 \|V\psi\|_{H^s}\le C\|V\|_{W^{s,\infty}}\|\psi\|_{H^s},
 \end{equation*}
and the estimate of Lemma~\ref{lem:trilin} can be replaced by
\begin{equation}\label{eq:THs}
 \|{\mathbf T}(\phi_1,\phi_2,\phi_3)\|_{H^s(\R^3)}\le C
\lVert \phi_1\rVert_{H^s(\R^3)}\lVert \phi_2\rVert_{H^s(\R^3)}
\lVert \phi_3\rVert_{H^s(\R^3)}.
\end{equation}
To see this, decompose $1/|x|$ as the sum of $K_1(x)={\bf
1}_{|x|<1}/|x|\in L^1 (\R^3)$ and $K_2(x)={\bf 1}_{|x|\ge
1}/|x|\in L^\infty (\R^3)$. We have
\begin{align*}
 \|{\mathbf T}(\phi_1,\phi_2,\phi_3)\|_{H^s}&\le C
\left\lVert K_1\ast\(\phi_1\phi_2\)\right\rVert_{H^s}
\lVert \phi_3\rVert_{H^s} \\
&\quad+ C\sum_{|\beta_1|+|\beta_2|\le s}
\left\lVert K_2\ast\(\d^{\beta_1}\phi_1 \d^{\beta_2}\phi_2\)\right\rVert_{L^\infty}
\lVert \phi_3\rVert_{H^s} .
\end{align*}
 Then \eqref{eq:THs} follows from
\begin{align*}
 \|K_1\ast\(\d^{\beta_1}\phi_1\d^{\beta_2}\phi_2\)\|_{L^2} &\le
 \|K_1\|_{L^1} \|\d^{\beta_1}\phi_1\d^{\beta_2}\phi_2\|_{L^2} \le
 \|K_1\|_{L^1} \lVert \phi_1 \phi_2\rVert_{H^s},\\
 \|K_2\ast\(\d^{\beta_1}\phi_1\d^{\beta_2}\phi_2\)\|_{L^\infty} &\le
 \|K_2\|_{L^\infty}
 \|\d^{\beta_1}\phi_1\d^{\beta_2}\phi_2\|_{L^1}\\
&\le \|K_2\|_{L^\infty}
 \|\d^{\beta_1}\phi_1\|_{L^2}\|\d^{\beta_2}\phi_2\|_{L^2},
\end{align*}
and the fact that $H^s$ is an algebra. A classical fixed-point
argument yields the local existence of a solution in $H^s$ (see,
e.g., \cite{CazHar}). Global existence when $s=J$ follows from the
same arguments as in the proof of Corollary~\ref{cor:higher}: we
have an a priori estimate in $H^J(\R^3)$, hence in
$L^\infty(\R^3)$, so the solution is global in time. Propagation
of higher regularity (when $s>J$) follows easily, thanks to tame
estimates (see \cite{BCD11}).
\end{proof}

\subsection{Coulomb potential}\label{Sec HF Coulomb}
\label{sec:cauchyC}

In the case where the external potential $V$ in \eqref{HOHF} is a
Coulomb potential \eqref{eq:coulombpot}, we prove:
\begin{theorem}\label{theo:r3coulomb}
 Let $J\ge 2$, $V$ given by \eqref{eq:coulombpot},
 $\psi_{01},\dots,\psi_{0N}\in
 L^2(\R^3)$. Then \eqref{eq:duhamel1} has a unique, global, solution
 \begin{equation*}
 (\psi_1,\dots,\psi_N)\in \(C(\R;L^2(\R^3))\cap L^{4J/3}_{\rm
 loc}\(\R;L^\infty(\R^3)\)\)^N.
 \end{equation*}
In addition, the following conservation laws hold, for all
$\ell,k\in\{1,\dots,N\}$:
\begin{equation}\label{eq:conservL2bis}
 \frac{d}{dt}\int_{\R^3}\overline{\psi_\ell} (t,x)\psi_k(t,x)dx=0.
\end{equation}
If moreover $\psi_{01},\dots,\psi_{0N}\in
 H^{2J}(\R^3)$, then
\begin{equation*}
 \psi_k\in C(\R; H^{2J}(\R^3))\cap L^\infty(\R;H^{J}(\R^3)),\quad
 k=1,\dots,N,
 \end{equation*}
and the energy
\begin{align*}
 {\mathcal E}_{\rm HF}& = \sum_{k=1}^N \<\psi_k,{\mathcal H}_{0J}\psi_k\>
 +\int_{\R^3}V(x)\rho_\Psi(x)dx\\
&\quad +
 \frac{\kappa}{2} \iint_{\R^3\times\R^3}
 \frac{\rho_\Psi(x)\rho_\Psi(y)-|\rho_\Psi(x,y)|^2}{|x-y|}dxdy
\end{align*}
is independent of time, where $\rho_\Psi$ is defined in
\eqref{eq:rho}.
\end{theorem}
\begin{proof}[Sketch of the proof]
 The global existence at the $L^2$ level follows the same lines as in
 the previous section. The only difference is that the term $V\psi_k$
 must be handled differently. Since the pair $(4J/3,\infty)$ is
 admissible, we may write
 \begin{align*}
 \|\Phi_k\|_{L^\infty_TL^2} &\le \|\psi_{0k}\|_{L^2} + C
 \|H\psi_k\|_{L^1_TL^2} + C \|F_k (\psi_k)\|_{L^1_TL^2} \\&
\quad + C\|V_1\psi_k\|_{L^{4J/(4J-3)}_TL^1}+
C\|V_2\psi_k\|_{L^1_TL^2},
\end{align*}
where we have decomposed the Coulomb potential as the sum of a
singular potential with compact support and a bounded potential,
\begin{equation*}
 V_1(x) =\frac{\alpha}{|x|}{\mathbf 1}_{|x|<1},\quad V_2(x)
 =\frac{\alpha}{|x|}{\mathbf 1}_{|x|\ge 1}.
\end{equation*}
Since $V_2\in L^\infty(\R^3)$, the last term is treated like in
the previous case. We also have, in view of the Cauchy--Schwarz
inequality (in $x$),
\begin{align*}
 \|V_1\psi_k\|_{L^{4J/(4J-3)}_TL^1}&\le
 \|V_1\|_{L^2(\R^3)}\|\psi_k\|_{L^{4J/(4J-3)}_TL^2}\\
& \le
 T^{(4J-3)/(4J)}\|V_1\|_{L^2(\R^3)}\|\psi_k\|_{L^\infty_TL^2},
\end{align*}
and we can conclude like in the proof of Lemma~\ref{lem:local1}, and
Theorem~\ref{theo:cauchyNL1}, successively, to obtain the first part
of the theorem.

For the second part, we follow the same strategy as in
\cite{Yajima87} and \cite{CaLB99}: the above fixed-point argument
can be repeated in
 \begin{align*}
 Y_T=\{ & (\psi_1,\dots,\psi_N) \in L^\infty([0,T];H^J(\R^3))^N ;\quad
 \|\psi_k\|_{L^\infty([0,T];L^2(\R^3))}\le 2\|\psi_{0k}\|_{L^2},\\
& \|\psi_k\|_{L^{4J/3}([0,T];L^\infty(\R^3))}\le
 2C_\infty\|\psi_{0k}\|_{L^2},\\
 & \|\d_t \psi_k\|_{L^\infty([0,T];L^2(\R^3))}\le 2K_{0k},\\
&
 \|\d_t \psi_k\|_{L^{4J/3}([0,T];L^\infty(\R^3))}\le
 2C_\infty K_{0k},\quad k=1,\dots,N\},
\end{align*}
where $K_{0k}$ corresponds morally to $\|\d_t \psi_{k\mid
 t=0}\|_{L^2}$. Since the time variable is characteristic, this
quantity is given by the equation, and we can take
\begin{align*}
 \hbar K_{0k}=\sum_{j=0}\frac{|\alpha(j)|\hbar^{2j}}{m^{2j-1}c^{2j-2}}
 \| \psi_{0k}\|_{\dot H^{2j}} +\|H\psi_{0k}\|_{L^2}
 +\|F_k(\psi_{0k})\|_{L^2}+\|V\psi_{0k} \|_{L^2} .
\end{align*}
The sum on the right-hand side is finite by assumption, the
nonlinear terms are finite by Sobolev embedding, and the last term
is controlled by $\|\nabla \psi_{0k}\|_{L^2}$ thanks to the Hardy
inequality (see, e.g., \cite{BCD11}).

The fixed-point argument performed in $X_T$ is readily adapted to
the case of $Y_T$, hence
\begin{equation*}
 \psi_k\in C([0,T]; H^{2J}(\R^3)),\quad
 k=1,\dots,N.
 \end{equation*}
Since $T$ depends on the $L^2$ norms of the initial data, and not
on higher-order norms, this local argument can be repeated in
order to cover any given time interval, hence
\begin{equation*}
 \psi_k\in C(\R; H^{2J}(\R^3)),\quad
 k=1,\dots,N.
 \end{equation*}
The conservation of the energy ${\mathcal E}_{\rm HF}$ follows from
standard arguments, and
the proof of Corollary~\ref{cor:higher} can be repeated to obtain
 the global boundedness of the $H^J$ norm.
\end{proof}

\section{Ground state}
\label{sec:ground}

The existence of a ground state for the semirelativistic Hartree
equation goes
back to \cite{LiYa87}. See also \cite{CiSe15} for the introduction of
an external potential as well as for more references.

The problem of the existence of a ground state for the
higher-order Schr\"odinger equations has been raised in
\cite{LuSc14} (see also \cite{LuSc-p}). In particular, the
second-order Schr\"odinger equation ($J=2$) has no ground state.
We will see in this section that the odd-order Schr\"odinger and
Hartree--Fock equations have a ground state and are relevant in
quantum physics.

\subsection{Higher-order Schr\"odinger equation}
\label{sec:high-order-schr}

In the case of the higher-order Schr\"o\-dinger equation with a
potential, \eqref{eq:pot}, the associated energy reads
\begin{equation}
 \label{eq:nrjS}
 {\mathcal E}_{\rm S} = \<\psi,{\mathcal H}_{0J}\psi\>
 +\int_{\R^3}V(x)|\psi(x)|^2dx.
\end{equation}
Integrations by parts show that the energy also takes the form
\begin{equation}\label{eq:nrjS2}
 {\mathcal E}_{\rm S} = \sum_{j=0}^J
\frac{\alpha(j)\hbar^{2j}}{m^{2j-1}c^{2j-2}} (-1)^{j+1}\int_{\R^3}
\left|(-\Delta)^{j/2}\psi(x)\right|^2 dx +\int_{\R^3}V(x)|\psi(x)|^2dx.
\end{equation}
Let
\begin{equation}\label{eq:m}
 \underline{m} = \inf \{ {\mathcal E}_{\rm S}\ ;\ \<\psi,\psi\>=1\}.
\end{equation}
The main cases we are interested in are when $V$ is a
harmonic-oscillator potential~or when $V$ is a Coulomb potential.
In the first case, we have $V\ge 0$: in view of \eqref{eq:nrjS2},
we readily see that $\underline{m}$ is finite if and only if $J$
is odd ($\underline{m}=-\infty$ if $J$ is even). Recall that we
have seen in \cite{CaMo12} and in
Section~\ref{sec:harmonic-potential} that for $J$ even, the
dynamics associated to \eqref{eq:pot} is not well-defined when $V$
is a harmonic-oscillator potential. Consequently, the case $J$ odd
seems to be the only reliable one. When $V$ is a Coulomb potential
\eqref{eq:coulombpot}, the dynamics associated to
\eqref{eq:coulomb} is well-defined for all $J\in \N$, as stated in
Theorem~\ref{theo:coulomb}. In view of the Cauchy--Schwarz and
Hardy inequalities,
\begin{equation*}
\left| \int_{\R^3}V(x)|\psi(x)|^2dx\right|\le |\alpha| \left\lVert
 \frac{\psi}{|x|}\right\rVert_{L^2(\R^3)} \left\lVert
 \psi\right\rVert_{L^2(\R^3)} \le C |\alpha| \|\nabla
\psi\|_{L^2(\R^3)}\|\psi\|_{L^2(\R^3)}.
\end{equation*}
On the other hand, if $J$ is even, and unlike what happens when
$V$ is a harmonic-oscillator potential, one can consider
\begin{equation*}
 M = \sup \{ {\mathcal E}_{\rm S}\ ;\ \<\psi,\psi\>=1\}=-\inf \{ -
 {\mathcal E}_{\rm S}\ ;\ \<\psi,\psi\>=1\},
\end{equation*}
which is finite, for the same reason by which $\underline{m}$ is
finite when $J$ is odd. It is then classical to infer (see, e.g.,
\cite{LiSi77,Lio87}):

\begin{proposition}
 Suppose that $J$ is odd, and that $V$ is the sum of a
 harmonic-oscillator potential and a Coulomb potential,
 \begin{equation*}
 V(x) = \frac{\alpha}{|x|}+\sum_{j=1}^3\om_j^2 x_j^2,\quad \alpha
 \in \R, \ \om_j\ge 0.
 \end{equation*}
 Then there exists $\psi$ such that $\<\psi,\psi\>=1$ and $
{\mathcal E}_{\rm S}=\underline{m}$, with $\underline{m}$ defined
in \eqref{eq:m}. If $J$ is even and $V$ is a Coulomb potential
($\om_j=0$ for all $j$), there exists $\psi$ such that
$\<\psi,\psi\>=1$ and $ {\mathcal E}_{\rm S}=M$.
\end{proposition}

\subsection{Higher-order Hartree--Fock equation}
\label{sec:higher-order-hartree}

In the case of the higher-order Hartree--Fock equation
\eqref{HOHF}, the associated energy is given by \eqref{eq:EHF}.
When $J=1$ (classical Hartree--Fock equation), the existence of
minimizers for $\mathcal E_{\rm HF}$ and their properties have
been studied in, e.g., \cite{LiSi77,Lio87,Sol91, Lew11}. As we
have seen in Section~\ref{sec:cauchyB}, $\mathcal E_{\rm HF}$
controls the $H^J$-norm, so for $J\ge 2$, the Hartree nonlinearity
plays a weaker role in the analysis compared to the standard case
$J=1$. If a harmonic confinement is present in all three spatial
directions,
\begin{equation*}
 V(x) = \frac{\alpha}{|x|}+\sum_{j=1}^3\om_j^2 x_j^2,\quad \alpha
 \in \R, \ \om_j> 0,
 \end{equation*}
then any minimizing sequence is compact, since the embedding
\begin{equation*}
 H^J(\R^3)\cap \F(H^1)\hookrightarrow L^2(\R^3)\cap L^6(\R^3)
\end{equation*}
is compact (see, e.g., \cite{ReedSimon2}).
\begin{proposition}
 Let $N\ge 1$, $\kappa =1$. Suppose that $J\ge 3$ is odd, and that $V$ is the
 sum of a harmonic-oscillator potential and a Coulomb potential,
 \begin{equation*}
 V(x) = \frac{\alpha}{|x|}+\sum_{j=1}^3\om_j^2 x_j^2,\quad \alpha
 \in \R, \ \om_j\ge 0.
 \end{equation*}
In either of the cases,
\begin{itemize}
\item full confinement: $\om_j>0$, $\forall j=1,\dots,3$, or
\item $\alpha>N-1$,
\end{itemize}
 there exists $\psi\in H^J(\R^3)^N$ such that
 \begin{equation*}
 \mathcal E_{\rm HF}(\psi) = \min \left\{ {\mathcal E}_{\rm HF}(\phi),\
 \phi\in H^J(\R^3)^N\ ;\ \int_{\R^3}\phi_j\bar \phi_k =\delta_{jk}\right\}.
 \end{equation*}
If $J\ge 2$ is even, $V$ is a Coulomb potential ($\om_j=0$ for all
$j$) and $\alpha>N-1$, there exists $\psi\in H^J(\R^3)^N$ such
that
 \begin{equation*}
 \mathcal E_{\rm HF}(\psi) = \max \left\{ {\mathcal E}_{\rm HF}(\phi),\
 \phi\in H^J(\R^3)^N\ ;\ \int_{\R^3}\phi_j\bar \phi_k =\delta_{jk}\right\}.
 \end{equation*}
\end{proposition}

\section{Conclusion}\label{Sec Conclusion}
In this article, we have shown that the higher-order
Schr\"{o}dinger equations are compatible with both the
harmonic-oscillator potential and the Coulomb potential.
Moreover, we have expanded the scope to higher-order Hartree--Fock
equations~with bounded and Coulomb potentials, which may become a
useful tool in many-particle physics. Finally, we have proved the
existence of a ground state for the odd-order ones among both
types of equations, which thus are the only ones to have a
physical meaning.

\appendix

\section{Convergence of the higher-order Schr\"odinger equation
 without potential}

Recall that for $s\in \N$, the (semi-classical) Sobolev space
$H^s(\R^d)$ is the space
of $L^2$ functions whose distributional derivatives of order at most
$s$ are in $L^2(\R^d)$. It is equipped with the norm
\begin{equation*}
 \|f\|_{H^s}=\sum_{|\alpha|\le s}\hbar^{|\alpha|}\|\d^\alpha f\|_{L^2(\R^d)}.
\end{equation*}
We denote by $H^\infty(\R^d)$ the intersection of all the spaces
$H^s(\R^d)$, $s\in \N$. These spaces can also be characterized in
terms of their Fourier transform, as defined in
Section~\ref{sec:linear}. For $s\in \N$, we have the equivalence
of norms:
\begin{equation}
 \|f\|_{H^s}^2\approx\int_{\pp\in \R^d} \(1+|\pp|^2\)^s
 \left\lvert \widehat f(\pp)\right\rvert^2\dd \pp.
\end{equation}
The \emph{homogeneous} Sobolev space $\dot H^s(\R^d)$ is equipped with
the norm
\begin{equation*}
 \|f\|_{\dot H^s} = \sum_{|\alpha|= s}\hbar^{|\alpha|}\|\d^\alpha
 f\|_{L^2(\R^d)}\approx\(\int_{\pp\in \R^d} |\pp|^{2s}
 \left\lvert \widehat f(\pp)\right\rvert^2\dd \pp\)^{1/2}.
\end{equation*}

\begin{theorem}\label{th:cv}
 Let $\psi_0\in H^\infty(\R^d)$, and consider the solutions $\psi$ and
 $\psi_J$ to \eqref{SR} and \eqref{eq:pot}, respectively, in the case
 $V=0$.
Suppose that $\psi_{\mid t=0}=\psi_{J\mid t=0}=\psi_0$. Then for
all $T>0$,
\begin{equation}\label{eq:reste}
 \sup_{t\in [0,T]}\|\psi(t,\cdot)-\psi_J(t,\cdot)\|_{L^2(\R^d)}\le
 \frac{2T}{\hbar}\frac{\alpha(J+1)}{m^{2J+1}c^{2J}}
 \|\psi_0\|_{\dot H^{2J+2}(\R^d)}.
\end{equation}
In particular, if there exists $C_0$ independent of $s\in \N$ such that
\begin{equation}\label{eq:borne-unif}
 \|\psi_0\|_{\dot H^{s}(\R^d)}\le C_0 (mc)^{s},
\end{equation}
then, by \eqref{eq:stirling},
\begin{equation*}
 \sup_{t\in
 [0,T]}\|\psi(t)-\psi_J(t)\|_{L^2(\R^d)}=
\O\(T\alpha(J+1)\)
=\O\(\frac{T}{J^{3/2}}\)\Tend J \infty 0.
\end{equation*}
\end{theorem}
In \cite{CaMo12}, the above convergence result was proven under the
assumption that the Fourier transform of $\psi_0$ is supported in the
ball of radius $mc$, a case where \eqref{eq:borne-unif} becomes
trivial, since
\begin{align*}
 \|\psi_0\|_{\dot H^{s}(\R^d)}^2& \approx \int_{\pp\in \R^d} |\pp|^{2s}
 \left\lvert \widehat \psi_0(\pp)\right\rvert^2\dd \pp=
 \int_{|\pp|\le mc} |\pp|^{2s}
 \left\lvert \widehat \psi_0(\pp)\right\rvert^2\dd \pp\\
&\le
(mc)^{2s}\int_{|\pp|\le mc} \left\lvert \widehat
\psi_0(\pp)\right\rvert^2\dd \pp = (mc)^{2s}\|\psi_0\|_{L^2}^2.
\end{align*}
The present extension is valid also for Gaussian wave packets
\begin{equation*}
 \psi_0(x) = \(\frac{mc}{\hbar}\)^{3/4}e^{-mc|x|^2/\hbar},
\end{equation*}
a case which was not covered in \cite{CaMo12}.
\begin{proof}
The Taylor formula yields
 \begin{equation*}
 E-E_J= mc^2 \(\frac{\pp}{mc}\)^{2J+2}\frac{1}{(J+1)!}\int_0^1
 f_{J+1}\(\theta \(\frac{\pp}{mc}\)^2\)(1-\theta)^{J}\dd\theta,
 \end{equation*}
where
\begin{equation*}
 f_n(x)= \frac{d^n}{dx^n}\(\sqrt{1+x}\).
\end{equation*}
We infer
\begin{equation}\label{eq:resteFourier}
 |E-E_J|\le mc^2 \(\frac{\pp}{mc}\)^{2J+2}\alpha(J+1).
\end{equation}
 The functions $\psi$ and $\psi_J$ solve, respectively,
 \begin{equation*}
 i\hbar \frac{\d \psi}{\d t} = E(-i\hbar\nabla_x)\psi;\quad
 i\hbar \frac{\d \psi_J}{\d t} = E_J(-i\hbar\nabla_x)\psi_J.
 \end{equation*}
The difference $w_J=\psi-\psi_J$ satisfies $w_{J\mid t=0}=0$ and solves
\begin{equation*}
 i\hbar \frac{\d w_J}{\d t} =E(-i\hbar\nabla_x)w_J +r_J,\quad \text{where
 }r_J = \(E(-i\hbar\nabla_x)-E_J(-i\hbar\nabla_x)\)\psi_J.
\end{equation*}
Multiply the above equation by $\overline{w_J}$, integrate in space,
and take the imaginary part: the term involving $E(-i\hbar\nabla_x)w_J$
disappears (because it is real), and we infer
\begin{equation}\label{eq:estnrj}
 \|w_J(t)\|_{L^2}\le \frac{2}{\hbar}\int_0^t\|r_J(\tau)\|_{L^2}\dd
 \tau.
\end{equation}
In view of the Plancherel formula, and since $E$ and $E_J$ are
Fourier multipliers,
\begin{equation*}
 \|r_J(\tau)\|_{L^2} = \|\widehat r_J(\tau)\|_{L^2}=\left\lVert
 \(E(\pp) - E_J(\pp) \)\widehat \psi_J(\tau)\right\rVert_{L^2}.
\end{equation*}
As noticed in \cite{CaMo12}, we have the explicit formula (since
$E_J$ is a Fourier multiplier)
\begin{equation*}
 \hat\psi_J(t,\pp)=\hat\psi_0(p)e^{it E_J(p)},
\end{equation*}
hence
\begin{equation*}
 \|r_J(\tau)\|_{L^2} = \left\lVert
 \(E(\pp) - E_J(\pp) \)\widehat \psi_0(\tau)\right\rVert_{L^2}.
\end{equation*}
The inequality \eqref{eq:resteFourier} and the Plancherel formula
yield
\begin{equation*}
 \|r_J(\tau)\|_{L^2}\le mc^2 \(\frac{1}{mc}\)^{2J+2}\|\psi_0\|_{\dot H^{2J+2}},
\end{equation*}
and the result follows (use \eqref{eq:stirling} for the final equality).
\end{proof}

\bibliographystyle{unsrt}
\bibliography{hartree}

\end{document}